\newcommand{\R}{\mathbb{R}}
\newcommand{\IRMSD}{\mathit{IRMSD}}
\newtheorem{theorem}{Theorem}
\newtheorem{lemma}{Lemma}
\DeclareMathOperator*{\argmin}{arg\,min}
\newcommand{\norm}[1]{\left\Vert#1\right\Vert}
\date{}
\author[1,*]{Johannes Bulin}
\author[1,2]{Jan Hamaekers}
\affil[1]{Department of Virtual Material Design, Fraunhofer Institute
    for Algorithms and Scientific Computing, Schloss Birlinghoven, 53757
    Sankt Augustin, Germany
}
\affil[2]{Fraunhofer Center for Machine Learning, Schloss
   Birlinghoven, 53757 Sankt Augustin, Germany}
\affil[*]{Corresponding author: \texttt{johannes.bulin@scai.fraunhofer.de}}
\title{Similarity of particle systems using an invariant root mean square deviation measure}
\begin{document}
\maketitle

\begin{abstract}
Determining whether two particle systems are similar is a common problem in particle simulations.
    When the comparison should be invariant under permutations, orthogonal transformations, and translations of
    the systems, special techniques are needed.
    We present an algorithm that can test particle systems of finite size for similarity and, if they
    are similar, can find the optimal alignment between them.
    Our approach is based on an invariant version of the root mean square deviation (RMSD) measure and 
    is capable of finding the globally optimal solution in $O(n^3)$ operations where $n$ is the number 
    of three-dimensional particles.
\end{abstract}

\section{Introduction}

The comparison of particle systems is a subproblem that occurs frequently in many interesting problems ranging
from the analysis of certain substructures to different machine learning approaches.
One particular use case is databases that store particle systems and corresponding
properties.
In many cases, some of these properties such as the potential energy are invariant 
under particle index permutations, translations, and orthogonal transformations of the system.
Other properties, such as forces, can be readily transferred from one system to a transformed instance 
of itself if the corresponding transformations are known.
To exploit these invariances, it is necessary to check two particle systems for similarity up to
these invariances and -- if the two systems are similar -- to obtain the corresponding transformations.

Some attempts have been made that use this idea.
In \cite{jones2016}, a database was presented that stores atomic neighbourhoods and corresponding forces.
Instead of forces, the databases in \cite{mellouhi2008} and \cite{konwar2011} hold local transition information that is
then used in a localised kinetic Monte Carlo algorithm.
Depending on the choice of the comparison procedure, the problem of false positives and negatives may occur:
false positives (two systems are designated as similar, despite not being so) may have catastrophic 
consequences, as incorrect properties can be attributed to a system.
False negatives (two similar systems are not recognised as such) may cause a waste
of computational time, as properties that already exist in the database must be recomputed.
The comparison techniques in \cite{mellouhi2008} and \cite{konwar2011} may cause false positives
as the comparison is done on a reduced representation of a system.
In \cite{jones2016}, an approximate algorithm was used to check for similarity that can result in
false negatives.

To avoid false positives/negatives we introduce a comparison procedure that
is based on an invariant version of the root mean square deviation measure (RMSD).
It is able to check particle systems for similarity and -- if two systems are similar -- can
also calculate the transformations that align one system to the other.
We also prove that our algorithm finds the globally optimal solutions in no more than $O(n^3)$ operations
where $n$ is the number of three-dimensional particles.

\section{The invariant RMSD}

Before introducing the invariant RMSD, we need a mathematical definition of a \emph{particle system}.
In this work, we define a particle system $S = (\mathbf x, \mathbf e)$ by its $n$ $d$-dimensional coordinates 
$\mathbf x = (x_1,\dots, x_n)$ where $x_i\in\R^d$ and corresponding particle elements $\mathbf e = (e_1,\dots, e_n)$
with $e_i\in\mathbb{Z}$.
This allows us to represent molecules as well as atomic neighbourhoods.

For two particle systems $S = (\mathbf x, \mathbf e)$ and $\tilde S = (\mathbf{\tilde x}, \mathbf{\tilde e})$ the classical RMSD is 
defined as
\begin{equation}
    \mathit{RMSD}(S, \tilde S) = \begin{cases}
        \infty, & \exists i:\, e_i\neq \tilde e_i \\
        \sqrt{\sum_{i=1}^n \norm{x_i - \tilde x_i}_2^2}, & \textrm{otherwise}.
    \end{cases}
\end{equation}
Similar to the work in \cite{sadeghi2013, jones2016, li2007} we obtain invariance under orthogonal 
transformations, translations, and permutations by minimising over all 
such transformations.
We call the resulting distance measure the \emph{invariant RMSD (IRMSD)}:
\begin{align}
    \IRMSD(S, \tilde S) =     
        \min_{\pi\in \mathbb{S}_n, \mathbf{R}\in\mathbb{O}_d, \mathbf{T}\in\R^d} \begin{cases}
        \infty, & \exists i: \tilde e_{\pi_i} \neq e_i \\
        \sqrt{\norm{x_i - \mathbf{R}\tilde x_{\pi_i} - \mathbf{T}}_2^2}, & \textrm{otherwise}.
    \end{cases} \label{eq:IRMSD} 
\end{align}
Here, $\mathbb{O}_d$ is the set of all $d\times d$ orthogonal matrices while $\mathbb{S}_n$ is the
set of all permutations of length $n$.

The $\IRMSD$ has a number of properties that makes it suitable for comparing particle systems.
$\IRMSD(S,\tilde S) = 0$ if and only if $S$ and $\tilde S$ are equal up to the 
mentioned transformations.
Thus, if we define two systems as similar if $\IRMSD(S,\tilde S) = 0$ we will not have any 
problems with false positives or negatives.
$\IRMSD$ is also continuous with respect to the particle coordinates and -- if the elements
are ignored -- fulfils all properties of a pseudometric on $\R^{n\times d}$.

Due to the presence of numerical inaccuracies and different types of noise, checking for $\IRMSD(S,\tilde S) = 0$ will
in most cases not be useful. 
For practical purposes, we consider two particle systems $S$ and $\tilde S$ as similar if
\begin{equation} \label{eq:tolerance}
    \IRMSD(S,\tilde S)\leq\epsilon
\end{equation} 
for some tolerance $\epsilon$.

\section{Calculating the invariant RMSD}

For arbitrary particle systems $S$ and $\tilde S$, calculating $\IRMSD(S, \tilde S)$ 
is an NP-hard problem \cite{sadeghi2013, li2007}.
Nevertheless, several ways to calculate or approximate it have been proposed.
One class of algorithms calculates only approximations to $\IRMSD(S,\tilde S)$.
This includes the algorithms in \cite {sadeghi2013, jones2016} but also
various approaches in computer vision \cite{besl1992, gold1998} where point sets 
are compared, instead of particle systems.
As only approximations to the $\IRMSD$ are calculated, this class of algorithms may cause false positives/negatives 
when comparing particle systems.

A second class of algorithms, including the methods given in \cite{li2007, yang2013, griffiths2017}, is able
to calculate the exact value $\IRMSD(S,\tilde S)$ if run long enough (which may scale exponentially with
the number of particles). 
For many particle systems, some of these algorithms can nevertheless calculate $\IRMSD(S,\tilde S)$ rapidly.
For other systems, especially highly symmetric ones, they can be prohibitively
slow (see evaluation section).

In our setting it is sufficient to check whether $\IRMSD(S, \tilde S)\leq\epsilon$; only in this
case must we calculate the exact value of $\IRMSD$ and the corresponding transformations.
In the context of particle simulations, we can also assume that the distance between
two particles is bounded from below by a minimum particle distance $\mu$.
This can be motivated by the strong repulsive forces that act between atoms
that are close to each other.
Using this property, we introduce a new algorithm for this special kind of problem.
It can determine whether the inequality in equation \ref{eq:tolerance} holds true or not and, if so,
is able to calculate the exact value $\IRMSD(S,\tilde S)$ and the corresponding optimal
transformations.

In the following, 
if we compare two particle systems $S = (\mathbf x, \mathbf e)$ and $\tilde S = (\mathbf{\tilde x}, \mathbf{\tilde e})$ we assume that at least
one of them has full rank.
In this context that means that either $\mathbf x$ or $\mathbf{\tilde x}$ contain $d$ linearly independent points.
If this is not the case, one can reduce the dimensionality of the points (for example by using PCA) and calculate
the $\IRMSD$ for the reduced coordinates.

For notational simplicity, we skip the calculation of the optimal translations in equation \ref{eq:IRMSD}.
This can be done by pre-shifting both particle systems such that the centroids of the coordinates are both at the origin. 
In this case, the optimal translation is always the 0-vector.

\begin{algorithm}[ht!]
    \caption{Check whether $\IRMSD(S,\tilde S)\leq\epsilon$.}
    \label{alg:IRMSD}
    \begin{algorithmic}[1]
        \Require Two particle systems $S = (\mathbf x, \mathbf e)$, $\tilde S= (\mathbf{\tilde x},\mathbf{\tilde e})$.
        \Require $\mathbf{\tilde x}$ must have full rank.
        \Require $\epsilon < \frac{\mu}{2\sqrt{1 + 4d}}$.
        \Require $\norm{x_i - x_j}_2\geq \mu\; \forall i\neq j$. 
        \Require $\norm{\tilde x_i - \tilde x_j}_2\geq \mu\; \forall i\neq j$. 

        \State $g^\ast = \infty$
        \State $\mathbf R^\ast = I$
        \State $\pi^\ast = \mathit{id}$

        \If{$\mathit{sort}(e)\neq \mathit{sort}(\tilde e)$}
            \State \Return $\IRMSD(S, \tilde S)\not\leq\epsilon$
        \EndIf

        \State Choose the $d$ indices $j_1,\dots, j_d$ that maximise 
            the absolute value of the determinant of $[\tilde x_{j_1}, \dots, \tilde x_{j_d}]$. \label{line:indices}

        \ForAll{$i_1,\dots, i_d:\, e_{i_k} = \tilde e_{j_k}$ ($k=1,\dots, d$)}
            \State $\mathbf R = \argmin_{\mathbf R\in\mathbb{O}_d} \sum_{k=1}^d \norm{x_{i_k} - \mathbf R\tilde x_{j_k}}_2^2$ \label{line:R1}
            \State $g = \sum_{k=1}^d \norm{x_{i_k} - \mathbf R\tilde x_{j_k}}_2^2$ \label{line:dist1}
            \If{$g > \epsilon^2$}
                \State continue
            \EndIf
            \State Set $\pi\in \mathbb S_n$ s.t. $\pi_{i_k} = j_k$ ($\forall k\in\{1,\dots, d\}$), $\pi_k= \argmin_{l:\, e_k=\tilde e_l} \norm{x_k - \mathbf R\tilde x_l}_2$ ($\forall k\not\in\{i_1,\dots, i_d\})$ \label{line:pi}
            \State $\mathbf{\tilde R} = \argmin_{\mathbf R\in\mathbb{O}_d} \sum_{k=1}^n \norm{x_k - \mathbf R\tilde x_{\pi_k}}_2^2$ \label{line:r}
            \State $\tilde g = \sqrt{\sum_{k=1}^n \norm{x_k - \mathbf{\tilde R}\tilde x_{\pi_k}}_2^2}$ \label{line:dist2}
            \If{$\tilde g\leq g^\ast$}
                \State $g^\ast = \tilde g$
                \State $\mathbf R^\ast = \mathbf{\tilde R}$
                \State $\pi^\ast = \pi$
            \EndIf
        \EndFor

        \If{$g^\ast\leq \epsilon$}
            \Return $g^\ast, \mathbf R^\ast, \pi^\ast$
        \Else
            \State \Return $\IRMSD(S, \tilde S)\not\leq\epsilon$
        \EndIf
    \end{algorithmic}
\end{algorithm}

Our new algorithm (see algorithm \ref{alg:IRMSD}) is essentially a three-step procedure.
In the first step (line \ref{line:indices}), $d$ maximally linearly independent particles are chosen from the first 
system $S$, meaning that the corresponding particle coordinates maximise the determinant of the matrix
that they form.
Afterwards, all length-$d$ combinations of particles in the other system $\tilde S$ are determined such that these 
particles can be mapped onto the previously chosen $d$ particles in the other system, using only an orthogonal transformation
(up to the tolerance $\epsilon$).
If $\IRMSD(S,\tilde S)\leq \epsilon$ and $\epsilon < \frac{\mu}{2\sqrt{1 + 4d}}$,
 one of these orthogonal transformations (line \ref{line:R1}) is close enough to the globally 
optimal orthogonal transformation such that it can be used to determine
the globally optimal permutation (see theorem \ref{thm:algproof} in the appendix).
Thus, in one of the for-loop cycles algorithm \ref{alg:IRMSD} will calculate the globally optimal
permutation $\pi^\ast$ and thus the corresponding globally optimal orthogonal transformation in line \ref{line:r}.
All necessary proofs can be found in the appendix.

\section{Time complexity}

To determine the time complexity of algorithm \ref{alg:IRMSD}, its three key components must 
be analysed:
the cost of calculating the indices in line \ref{line:indices}, the cost of the lines \ref{line:R1}-\ref{line:dist1},
and the cost of the lines \ref{line:pi}-\ref{line:dist2}.

We begin with line \ref{line:indices}.
The $d$ indices that maximise the determinant can be calculated by a brute-force search
over all $\binom{n}{d}$ combinations.
For each combination the determinant of the corresponding matrix must be calculated, which requires $O(d^3)$ operations.
Combined, the time complexity of line \ref{line:indices} is $O(n^d d^3)$.

Calculating $\mathbf R$ in line \ref{line:R1} can be done by using the Kabsch algorithm \cite{kabsch1976}, which needs
$O(d^3)$ operations.
The subsequent calculation of $g$ in line \ref{line:dist1} also has time complexity $O(d^3)$.
As these two lines may be called up to $\binom{n}{d}$ many times, they contribute $O(n^d d^3)$ operations
to the total complexity.

It can be shown that there are no more than $2.415^d\binom{n}{d-1}$ index combinations $i_1,\dots, i_d$ such that
$g$ in line \ref{line:dist1} fulfils $g\leq\epsilon^2$ (see theorem \ref{thm:compproof} in the appendix).
Thus line \ref{line:pi} and subsequent parts of the code are called no more
than $O(2.415^d n^{d-1})$ times.
The permutation $\pi$ in line \ref{line:pi} can be calculated by a cell based approach
in $O(3^d d n)$ operations \cite{bentley1977}.
Both $\mathbf{\tilde R}$ and $\tilde g$ in lines \ref{line:R1} and \ref{line:dist2} can be calculated 
in $O(nd^2 + d^3)$ operations, once again using the Kabsch algorithm.
Therefore, the total 
contributions of lines \ref{line:pi} - \ref{line:dist2} to the time complexity is of order 
$O(2.415^dn^{d-1} (3^d dn + nd^2 + d^3)) = O(7.245^d n^d d)$.
Thus for the fixed ``standard'' dimensions $d = 2$ and $d = 3$, the algorithm has complexity
$O(n^2)$ and $O(n^3)$, respectively.

\section{Evaluation}

To verify the theoretical results we have implemented our algorithm for three-dimensional particles
as an extension module for the QuantumATK software package \cite{atk, smidstrup2019}.

Four different datasets were used for testing.
The first dataset (\emph{silicon dataset}) was derived from the database of silicon configurations in \cite{libatom, bartok2018}.
For each configuration in the database we calculated the atomic neighbourhoods of radius $6$Å of
all particles in this configuration.
Using these neighbourhoods, one million neighbourhood pairs $(S_i, \tilde S_i)$ were randomly chosen
such that $S_i$ and $\tilde S_i$ have the same number of particles. 

The next dataset (\emph{C720 dataset}) was generated using the C\textsubscript{720} fullerene molecule.
The coordinates $\mathbf c = (c_1,\dots, c_n)$ and elements $\mathbf e = (e_1,\dots, e_n)$ ($n=720$) of this molecule were taken and 1000 pairs
$(S_i,\tilde S_i)$ defined by 
\begin{equation}
\begin{aligned}
    S_i &= (\mathbf R^{(i)} \mathbf c_{\pi^{(i)}}, \mathbf e_{\pi^{(i)}}) \\
    \tilde S_i &= \left(\mathbf{\tilde R}^{(i)} \mathbf c_{\tilde \pi^{(i)}} + \frac{\tau^{(i)}}{\norm{\tau^{(i)}}_F} \rho^{(i)}, \mathbf  e_{\tilde \pi^{(i)}}\right)
\end{aligned} \label{eq:generate}
\end{equation}
were generated. $\mathbf R^{(i)}$ and $\mathbf{\tilde R}^{(i)}$ are random rotation matrices whereas $\pi^{(i)}$ and $\tilde \pi^{(i)}$ are
random permutations.
Additional noise was added in the form of $\tau^{(i)}$ which was drawn from the $3n$-dimensional multivariate standard normal distribution $\mathcal N(0, I)$.
This noise was scaled by $\rho^{(i)} / \norm{\tau^{(i)}}_2$ where $\rho^{(i)}$ is exponentially distributed with parameter $\lambda = 3$. 
This ensures that the magnitude of the noise
\begin{equation}
    \norm{\frac{\tau^{(i)}}{\norm{\tau^{(i)}}_F} \rho^{(i)}}_F = \rho^{(i)}
\end{equation}
varies from zero noise ($\rho^{(i)} = 0$) to very noisy ($\rho^{(i)}$ large).

The \emph{diamond dataset} was created in a similar way.
For different cutoff radii $r$ we calculated the atomic neighbourhood of some particle in a perfect
diamond crystal (the choice of the particle is unimportant as all neighbourhoods are equal).
Using the coordinates $\mathbf{c}^{(r)}$ and elements $\mathbf{e}^{(r)}$ of this neighbourhood we 
generated 1000 pairs $(S^{(r)}_i, \tilde S^{(r)}_i)$ in the same way as in 
equation \ref{eq:generate}.

Finally the \emph{spherical dataset} was created to investigate the behaviour of the algorithm 
over systems of different sizes.
For different numbers of particles $n$ we distributed $n$ points almost uniformly on the unit sphere 
using generalised spiral points from \cite{saff1994}.
The sphere was then scaled such that the minimal distance between two points became $2\textrm{Å}$.
These coordinates $\mathbf{c}^{(n)}$ were then used to build 100 particle system pairs $(S_i^{(n)}, \tilde S_i^{(n)})$
by the application of random orthogonal transformations and permutations (no noise was used).
For all particles, the same elements were used.

As reference algorithms we chose Go-ICP \cite{yang2013, goicp} and Go-Permdist \cite{griffiths2017, gopermdist}. 
These two algorithms are branch-and-bound based and calculate the optimal solution.
In both algorithm we have disabled the calculation of the optimal translation
as it caused Go-ICP to become much slower (the translation search in Go-ICP is expensive as it
also supports registration problems without one-to-one correspondences which is harder).
As mentioned earlier, it is also easy to decouple the calculation of the optimal translation
from finding the optimal permutation and orthogonal transformation.

We also modified the code such that both algorithm will stop their calculations if the lower
bounds in the branch-and-bound section become larger than the tolerance $\epsilon$. 
As we only need to check for similarity, it is not necessary to calculate the exact value $\IRMSD(S,\tilde S)$ 
once it has been established that $\IRMSD(S, \tilde S) > \epsilon$.
For both algorithm we stopped the calculation as soon as the difference between the 
lower and upper bounds became smaller than $10^{-2}$.

During our calculations we found a small mistake in the calculation of the lower bounds 
in the Go-Permdist algorithm that caused false negatives in a few cases.
We thus modified the Go-Permdist algorithm such that it used our corrected lower bounds.
The corrected terms can be found in the appendix.

\subsection{Accuracy}

We began by testing the accuracy of the algorithms.
To do this, we applied the three algorithms to all particle system pairs in the three datasets
and compared the results for different tolerances $\epsilon < \frac{\mu}{2\sqrt{1 + 4d}}$.
As Go-ICP and Go-Permdist optimise only over all rotations and not over all orthogonal transformations
(which also includes reflections), we encountered a few cases where our algorithm calculated
smaller $\IRMSD$ values.
This was expected and occurred only when the optimal orthogonal transformation was not a rotation
matrix. 
In the subsequent performance analysis we simply excluded these cases as not relevant.

Otherwise, we could not detect any differences between the three algorithms in terms of
the results that they calculated. 
The $\IRMSD$ values that the three algorithms computed never differed by more than $10^{-2}$, which was the
chosen stopping criterion for Go-ICP and Go-Permdist.

When testing the unfixed Go-Permdist algorithm we encountered some
false negatives where the algorithm wrongly determined that $\IRMSD(S,\tilde S) > \epsilon$.
The corrected Go-Permdist algorithm did not suffer from such problems.

\subsection{General Performance}

Next, we investigate the performance of our algorithm in terms of 
computational efficiency.
To do this we measured the wall time that was necessary to compare each pair in our three
datasets. 
All the tests were run on a single core of an Intel Xeon Gold 5118 processor using the tolerance
$\epsilon=0.2\textrm{Å}$.

\begin{figure}[ht!]
\begin{center}
\begin{tikzpicture}
\begin{semilogyaxis}[
    boxplot/draw direction = y,
    ylabel={Wall time in s},
    xtick={1,2,3},
    xticklabels = {Algorithm \ref{alg:IRMSD}, Go-ICP, Go-Permdist},
    xticklabel style = {align=center},
    ]

    \addplot+ [
        mark=*,
        black,
        mark options = {solid, fill=black},
        boxplot prepared={
            lower whisker = 0.00055,
            lower quartile = 0.0008,
            median = 0.00085,
            upper quartile = 0.00094,
            upper whisker = 0.075,
        },
    ] coordinates {(1, 0.002)};
    \addplot+ [
        mark=diamond*,
        blue,
        mark options = {solid, fill=blue},
        boxplot prepared={
            lower whisker = 0.000095,
            lower quartile = 0.0236,
            median = 0.0263,
            upper quartile = 0.0311,
            upper whisker = 2.301,
        },
    ] coordinates {(2, 0.034)};
    \addplot+ [
        mark=square*,
        red,
        mark options = {solid, fill=red},
        boxplot prepared={
            lower whisker = 0.000143,
            lower quartile = 0.00085,
            median = 0.00106,
            upper quartile = 0.0015,
            upper whisker = 371.8,
        },
    ] coordinates {(3, 0.00835)};
\end{semilogyaxis}
\end{tikzpicture}
\end{center}
\caption{Distribution of the wall times in the silicon dataset. The lines show (starting from below)
the 0\% (minimum), 25\%, 50\% (median), 75\% and 100\% (maximum) percentile of the wall times. The mean
runtime is represented by the single point.}
\label{fig:walltime_silicon}
\end{figure}
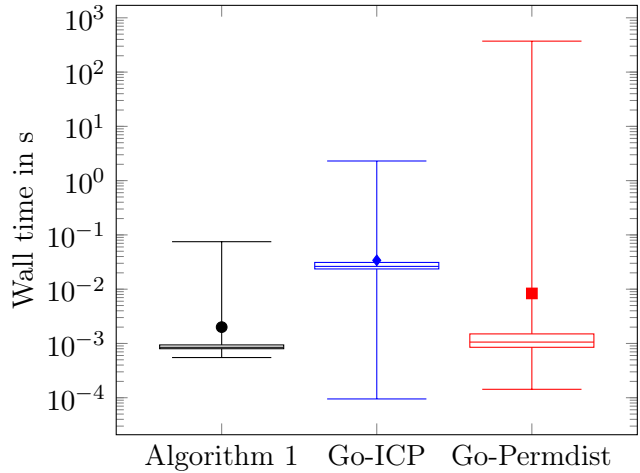

Figure \ref{fig:walltime_silicon} shows the distribution of the runtimes on the silicon dataset.
We observe that our algorithm is on average about four times faster than Go-Permdist and ten times faster than Go-ICP.
Interestingly, Go-ICP and Go-Permdist show a much higher variation in runtime.
Whereas our algorithm never needs more than 0.07s for a comparison, this value is
2.3s for Go-ICP and 371.8s for the (fixed) Go-Permdist algorithm.
We investigated this behaviour and observed that all three algorithms are slowest when the particle 
systems under investigation are highly symmetric.
We note that our algorithm is much faster than Go-ICP
and Go-Permdist in these cases.
In the silicon dataset this occurred when atomic neighbourhoods with (almost) perfect crystal 
structure were encountered.
By contrast, all three algorithms were very fast over particle systems with low symmetry
such as linear chains of atoms.

We turn now to the C720 dataset.
The C\textsubscript{720} fullerene is a highly symmetric molecule, so if our initial observations hold true, 
we might expect relatively slow calculations.
As shown in figure \ref{fig:walltime_c720} this seems to be the case.
Our algorithm is much faster here than the two other algorithms.
Go-ICP takes on average 
about 50 times more computational time, for Go-Permdist this figure increases to 500.
Once again, the runtimes of Go-ICP and Go-Permdist span several orders of magnitude.
Here, this behaviour cannot be explained by the variance in symmetry as all
systems are (slightly perturbed) symmetric C\textsubscript{720} molecules.
However, we observed that Go-ICP and Go-Permdist can be very fast in two cases:
first, when the optimal orthogonal transformation is close to the identity matrix and second,
if a large amount of noise was added to the one of the particle systems.

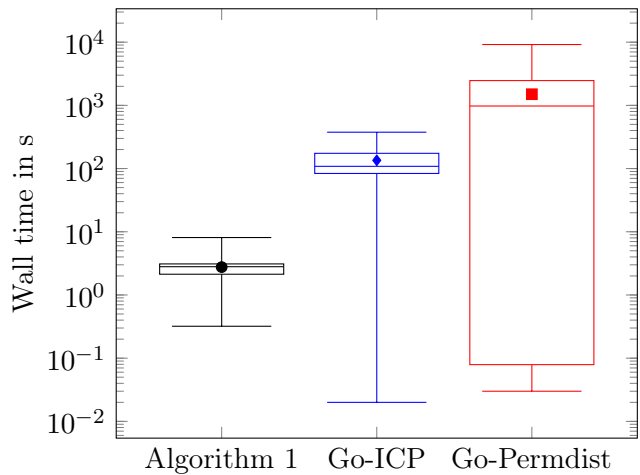
\begin{figure}[ht!]
\begin{center}
\begin{tikzpicture}
\begin{semilogyaxis}[
    boxplot/draw direction = y,
    ylabel={Wall time in s},
    xtick={1,2,3},
    xticklabels = {Algorithm \ref{alg:IRMSD}, Go-ICP, Go-Permdist},
    xticklabel style = {align=center},
    ]

    \addplot+ [
        mark=*,
        black,
        mark options = {solid, fill=black},
        boxplot prepared={
            lower whisker = 0.32,
            lower quartile = 2.13,
            median = 2.8,
            upper quartile = 3.1,
            upper whisker = 8.1,
        },
    ] coordinates {(1, 2.76)};
    \addplot+ [
        mark=diamond*,
        blue,
        mark options = {solid, fill=blue},
        boxplot prepared={
            lower whisker = 0.02,
            lower quartile = 83.7,
            median = 108.5,
            upper quartile = 174.3,
            upper whisker = 376.2,
        },
    ] coordinates {(2, 135)};
    \addplot+ [
        mark=square*,
        red,
        mark options = {solid, fill=red},
        boxplot prepared={
            lower whisker = 0.03,
            lower quartile = 0.079,
            median = 977,
            upper quartile = 2465,
            upper whisker = 9166,
        },
    ] coordinates {(3, 1501)};
\end{semilogyaxis}
\end{tikzpicture}
\end{center}
\caption{Distribution of the wall times in the C720 dataset.}
\label{fig:walltime_c720}
\end{figure}
We also ran our tests on the diamond dataset, choosing the cutoff radius $r=6\textrm{Å}$. 
Similar to the C720 dataset, the diamond dataset contains highly symmetric particle
systems.
Unsurprisingly, the runtime plot in figure \ref{fig:walltime_diamond} looks similar
to the corresponding figure for the C720 dataset.
The runtimes are lower on average, which can be explained by the lower number of particles (159) per
particle system compared to the C720 dataset (720).

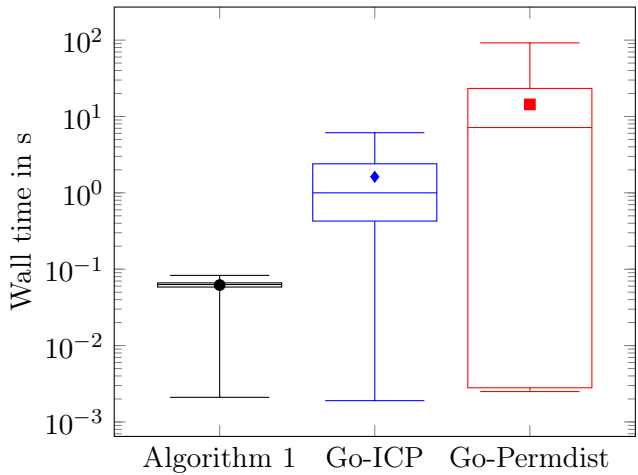
\begin{figure}[ht!]
\begin{center}
\begin{tikzpicture}
\begin{semilogyaxis}[
    boxplot/draw direction = y,
    ylabel={Wall time in s},
    xtick={1,2,3},
    xticklabels = {Algorithm \ref{alg:IRMSD}, Go-ICP, Go-Permdist},
    xticklabel style = {align=center},
    ]

    \addplot+ [
        mark=*,
        black,
        mark options = {solid, fill=black},
        boxplot prepared={
            lower whisker = 0.0021,
            lower quartile = 0.0584,
            median = 0.0630,
            upper quartile = 0.06625,
            upper whisker = 0.083,
        },
    ] coordinates {(1, 0.062)};
    \addplot+ [
        mark=diamond*,
        blue,
        mark options = {solid, fill=blue},
        boxplot prepared={
            lower whisker = 0.0019,
            lower quartile = 0.427,
            median = 1.0,
            upper quartile = 2.4,
            upper whisker = 6.13,
        },
    ] coordinates {(2, 1.62)};
    \addplot+ [
        mark=square*,
        red,
        mark options = {solid, fill=red},
        boxplot prepared={
            lower whisker = 0.0025,
            lower quartile = 0.0028,
            median = 7.17,
            upper quartile = 23.30,
            upper whisker = 92.13,
        },
    ] coordinates {(3, 14.39)};
\end{semilogyaxis}
\end{tikzpicture}
\end{center}
\caption{Distribution of the wall times in the diamond dataset ($r=6$Å).}
\label{fig:walltime_diamond}
\end{figure}

\subsection{Scaling}

To test the scaling of the algorithms with respect to the size of the particle system, we
used different cutoff radii in the diamond dataset to create systems of different sizes.
Again setting $\epsilon=0.2\textrm{Å}$ we looked at the worst-case runtimes as shown in 
figure \ref{fig:scaling}.
\begin{figure}[ht!]
\begin{center}
\begin{tikzpicture}
\begin{loglogaxis}[
    xlabel={Number of particles},
    ylabel={Worst-case wall time in s},
    legend style = {at = {(1, 0)}, anchor = south east},
    xmin=47,xmax=87000,ymin=1e-2, ymax=5e4,
    width=0.7\textwidth,
    height=0.5\textwidth
    ]
    \addplot[-, black, mark=*, style={thick}] table[x index=0, y index=1]{images/diamond_average_irmsd.dat};
    \addlegendentry{Algorithm \ref{alg:IRMSD}}
    \addplot[dotted, blue, mark=diamond*, mark options={solid}, style={thick}] table[x index=0, y index=1]{images/diamond_average_goicp.dat};
    \addlegendentry{Go-ICP}
    \addplot[dashed, red, mark=square*, style={thick}, mark options={solid}] table[x index=0, y index=1]{images/diamond_average_gopermdist.dat};
    \addlegendentry{Go-Permdist}
    \addplot[-, gray, domain=47:87000] {1e-6 * x^2};
    \addlegendentry{$O(n^2)$}
\end{loglogaxis}
\end{tikzpicture}
    \caption{Worst-case wall times for the diamond example.}
    \label{fig:scaling}
\end{center}
\end{figure}
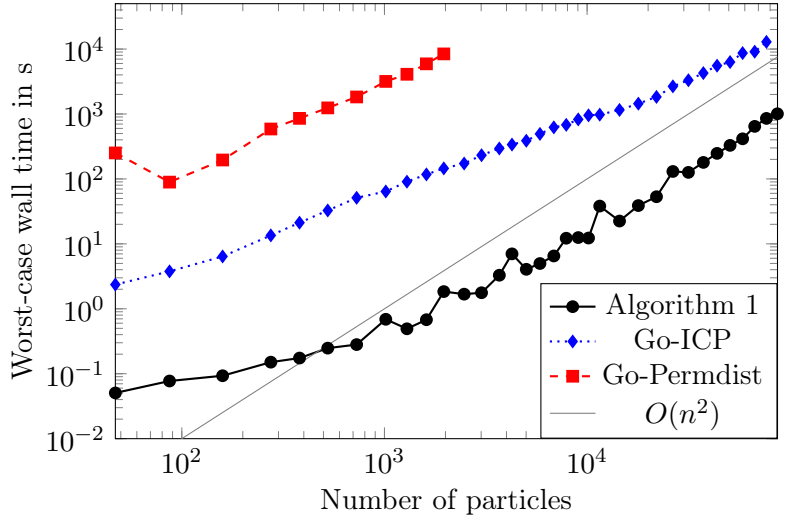

For all tested numbers of particles, the worst-case runtimes of our algorithm are much lower than those of the two
reference algorithms, although the gap between Go-ICP and algorithm \ref{alg:IRMSD} seems
to narrow for large numbers of particles. 
Unexpectedly, the run times of our algorithm seem to scale as $O(n^2)$ with the number of particles
despite the fact that all particles are three-dimensional.
This behaviour can be explained by the setup of the diamond dataset.
To create the particle systems in this dataset, atomic neighbourhoods in a perfect diamond crystal were 
used.
Due to the structure of a diamond crystal, the particles in such a neighbourhood are arranged in shells
around the central particle of the neighbourhood.
This shell-type structure reduces the number of possible permutations, as only particles
with similar distances to the central particle may be matched to each other.

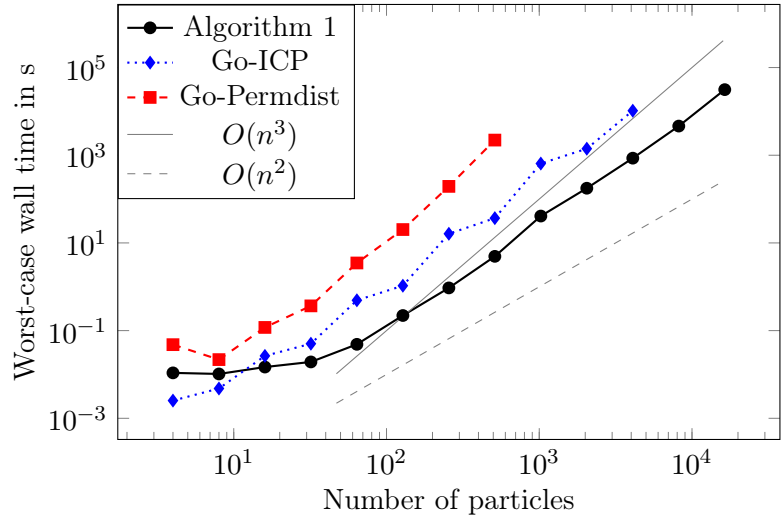
\begin{figure}[ht!]
\begin{center}
\begin{tikzpicture}
\begin{loglogaxis}[
    xlabel={Number of particles},
    ylabel={Worst-case wall time in s},
    legend style = {at = {(0, 1)}, anchor = north west},
    width=0.7\textwidth,
    height=0.5\textwidth
    ]
    \addplot[-, black, mark=*, style={thick}] table[x index=0, y index=1]{images/sphere_worst_irmsd.dat};
    \addlegendentry{Algorithm \ref{alg:IRMSD}}
    \addplot[dotted, blue, mark=diamond*, style={thick}, mark options={solid}] table[x index=0, y index=1]{images/sphere_worst_goicp.dat};
    \addlegendentry{Go-ICP}
    \addplot[dashed, red, mark=square*, style={thick}, mark options={solid}] table[x index=0, y index=1]{images/sphere_worst_gopermdist.dat};
    \addlegendentry{Go-Permdist}

    \addplot[-, gray, domain=47:16000] {1e-7 * x^3};
    \addlegendentry{$O(n^3)$}
    \addplot[dashed, gray, domain=47:16000] {1e-6 * x^2};
    \addlegendentry{$O(n^2)$}
\end{loglogaxis}
\end{tikzpicture}
    \caption{Worst-case wall times for the sphere example.}
    \label{fig:scaling2}
\end{center}
\end{figure}

In the spherical dataset all particles in a particle system lie on the surface of some sphere,
thus there exists only a single shell.
When we use our algorithm on this dataset, we can actually see the $O(n^3)$ scaling that we 
expected in the beginning (see figure \ref{fig:scaling2}). 
Go-ICP and Go-Permdist also seem to scale cubically with the number of particles but are much
slower than algorithm \ref{alg:IRMSD}, except for very small systems.

\section{Conclusion}

We have developed an algorithm that can be used to check particle systems for similarity
and align them if they are similar.
On average, our algorithm is faster than both Go-ICP and Go-Permdist, in some cases 
by several orders of magnitude. 

Compared to Go-ICP and Go-Permdist our algorithm is limited by the upper bound for the
allowed tolerance $\epsilon$ and can thus be used only in some cases.
While we do not think that this will cause problems when used in a database context as 
explained in the introduction,  
it might be worthwhile to extend algorithm \ref{alg:IRMSD} such that it resorts to some form
of branch-and-bound-style permutation search if the tolerance $\epsilon$ is larger 
than the upper bound $\frac{\mu}{2\sqrt{1 + 4d}}$.

In this work we have presented only a comparison approach for particle systems. 
To build a database of particle systems, additional work is needed to design a database
structure that supports the efficient querying of particle systems.

\section*{Appendix}

\subsection{Proofs}

The first proof shows the consequences of choosing the indices $j_1,\dots, j_d$ in algorithm
\ref{alg:IRMSD}:

\begin{theorem}\label{theorem:infinitynorm}
    Let $\mathbf y = (y_1,\dots, y_n)$ be $n$ $d$-dimensional points ($y_i\in\R^d$) and 
    assume that the matrix $\left[y_1\, \dots\, y_n\right]\in\R^{d\times n}$ has
    full (column) rank $d$.
    Let $j_1,\dots, j_d$ be the $d$ indices that maximise
    \begin{equation}
        \max_{j_1,\dots, j_d} \left| \det\left(\left[\begin{array}{ccc} | & & | \\ y_{j_1} & \dots & y_{j_d} \\ | & & | \end{array} \right]\right)\right|. \label{eq:detmax}
    \end{equation}    
    Then each point $y_i$ can be written as
    \begin{equation}
        y_i = \left[\begin{array}{ccc} | & & | \\ y_{j_1} & \dots & y_{j_d} \\ | & & | \end{array} \right] v_i
    \end{equation}
    where $v_i\in\R^d$ fulfils
    \begin{equation}
        \norm{v_i}_\infty\leq 1.
    \end{equation}
    We note that this theorem has already been stated in \cite{mikhalev2015} though without an
    explicit proof.
\end{theorem}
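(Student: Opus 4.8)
The plan is to reduce everything to Cramer's rule together with the maximality of the chosen determinant. First I would observe that the maximum in \eqref{eq:detmax} is strictly positive: since $\left[y_1\,\dots\,y_n\right]$ has full column rank $d$, at least one selection of $d$ columns is linearly independent, so the optimal matrix $B \coloneqq \left[y_{j_1}\,\dots\,y_{j_d}\right]$ is invertible. Consequently, for each $i$ the equation $y_i = B v_i$ has the unique solution $v_i = B^{-1} y_i \in \R^d$, which is the vector whose existence is claimed; it remains only to bound its coordinates.

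Next I would apply Cramer's rule to this linear system. For each component index $k \in \{1,\dots,d\}$,
\begin{equation}
    (v_i)_k = \frac{\det\!\left(\left[y_{j_1}\,\dots\,y_{j_{k-1}}\ y_i\ y_{j_{k+1}}\,\dots\,y_{j_d}\right]\right)}{\det(B)},
\end{equation}
i.e.\ the numerator is the determinant of $B$ with its $k$-th column replaced by $y_i$. The key point is that, up to a sign coming from reordering columns, this numerator equals the determinant of the $d\times d$ submatrix of $\left[y_1\,\dots\,y_n\right]$ indexed by the set $\{j_1,\dots,j_{k-1},i,j_{k+1},\dots,j_d\}$, which is itself a legitimate choice of $d$ column indices. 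By the defining maximality of $\lvert\det(B)\rvert$ over all such choices, the absolute value of the numerator is at most $\lvert\det(B)\rvert$, hence $\lvert (v_i)_k\rvert \le 1$. Taking the maximum over $k$ gives $\norm{v_i}_\infty \le 1$.

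I do not expect a genuine obstacle here; the only points needing a little care are the degenerate overlaps between $i$ and the chosen indices. If $i = j_k$ the numerator is exactly $\det(B)$, so $(v_i)_k = 1$; if $i = j_\ell$ for some $\ell \ne k$ the replaced matrix has a repeated column and the numerator vanishes, so $(v_i)_k = 0$; in both sub-cases the bound holds trivially. One should also note explicitly that permuting columns only changes the determinant by a sign, which is why comparing $\lvert\det\rvert$ against the maximum in \eqref{eq:detmax} is legitimate even though the index set $\{j_1,\dots,j_{k-1},i,j_{k+1},\dots,j_d\}$ need not be listed in the same order as in the optimization. With these remarks the argument is complete.
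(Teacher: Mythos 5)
Your proposal is correct and follows essentially the same route as the paper's own proof: invert the maximizing submatrix, apply Cramer's rule, and use the maximality of $\lvert\det(B)\rvert$ to bound each component by $1$. The extra remarks about degenerate index overlaps and column-reordering signs are details the paper leaves implicit, but they do not change the argument.
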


\begin{proof}
    As the matrix $\left[y_1\, \dots\, y_n\right]$ is assumed to have full rank, the determinant in equation \ref{eq:detmax} is
    nonzero and therefore the matrix
    \begin{equation}
        \mathbf{\tilde Y} = \left[\begin{array}{ccc} | & & | \\ y_{j_1} & \dots & y_{j_d} \\ | & & | \end{array} \right]
    \end{equation}
    is invertible. 
    We set $v_i = \mathbf{\tilde Y^{-1}} y_i$ for all indices $i$ and show that it fulfils $\norm{v_i}_\infty\leq 1$.
    Using Cramer's rule, the $k$-th component of $v_i$ is given by
    \begin{equation}
        (v_i)_k = (\mathbf{\tilde Y}^{-1} y_i)_k = \frac{\det \mathbf{\tilde Y}^{(i, k)}}{\det \mathbf{\tilde Y}}.
    \end{equation}
    $\mathbf{\tilde Y}^{(i, k)}$ is the matrix that arises when the $k$-th column of $\mathbf{\tilde Y}$ is replaced by $y_i$.
    As $|\det\mathbf{\tilde Y}| \geq |\det \mathbf{\tilde Y}^{(i, k)}|$ by definition, we finally obtain
    \begin{equation}
        \norm{v_i}_\infty = \max_k |(v_i)_k| = \max_k \left|\frac{\det \mathbf{\tilde Y}^{(i, k)}}{\det \mathbf{\tilde Y}}\right|\leq 1. 
    \end{equation}
\end{proof}

\begin{theorem}\label{theorem:roterror}
    Let $\mathbf{x} = (x_1,\dots, x_n)$ and $\mathbf{y} = (y_1,\dots, y_n)$ be $n$ $d$-dimensional points each 
    ($x_i\in\R^d$ and $y_i\in\R^d$).
    Assume that the $y_1,\dots, y_n$ fulfill the assumptions in theorem \ref{theorem:infinitynorm} and 
    let $j_1,\dots, j_d$ be the corresponding indices that maximise equation \ref{eq:detmax}.
    Also let $i_1,\dots, i_d$ be $d$ distinct indices, $\mathbf R\in\mathbb{O}_d$ and $\mathbf{\tilde R}\in\mathbb{O}_d$ 
    arbitrary orthogonal transformations
    and $\pi\in \mathbb S_n$ an arbitrary permutation that fulfils $\pi_{i_k} = j_k$ ($\forall k=1,\dots, d$).

    Then for any $l=1,\dots, n$, it holds that
    \begin{equation}
        \norm{(\mathbf R - \mathbf{\tilde R}) y_l}_2 \leq \sqrt{d} (\tilde\delta + \delta),
    \end{equation}
    where
    \begin{align}
        \delta &= \sqrt{\sum_{k=1}^d \norm{x_{i_k} - \mathbf Ry_{\pi_{i_k}}}_2^2} \label{eq:delta1}\\ 
        \tilde \delta &= \sqrt{\sum_{k=1}^d \norm{x_{i_k} - \mathbf{\tilde R}y_{\pi_{i_k}}}_2^2}.\label{eq:delta2}
    \end{align}    
\end{theorem}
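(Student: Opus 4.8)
The plan is to use Theorem~\ref{theorem:infinitynorm} to reduce the estimate for an arbitrary point $y_l$ to estimates for the $d$ anchor points $y_{j_1},\dots,y_{j_d}$, and then to bring in the target points $x_{i_k}$ via the triangle inequality. First I would invoke Theorem~\ref{theorem:infinitynorm}: since $\mathbf y$ satisfies its hypotheses, every $y_l$ can be written as $y_l = \mathbf{\tilde Y} v_l$ with $\mathbf{\tilde Y} = [\,y_{j_1}\ \cdots\ y_{j_d}\,]$ and $\norm{v_l}_\infty \le 1$. Expanding this matrix--vector product column by column gives $(\mathbf R - \mathbf{\tilde R})y_l = \sum_{k=1}^d (v_l)_k\,(\mathbf R - \mathbf{\tilde R})y_{j_k}$, so the triangle inequality together with $|(v_l)_k| \le \norm{v_l}_\infty \le 1$ yields
\begin{equation*}
    \norm{(\mathbf R - \mathbf{\tilde R})y_l}_2 \;\le\; \sum_{k=1}^d \norm{(\mathbf R - \mathbf{\tilde R})y_{j_k}}_2 .
\end{equation*}

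Second, since $\pi_{i_k} = j_k$ we have $y_{\pi_{i_k}} = y_{j_k}$, so inserting $x_{i_k}$ and using the triangle inequality once more gives $\norm{(\mathbf R - \mathbf{\tilde R})y_{j_k}}_2 \le \norm{x_{i_k} - \mathbf R y_{j_k}}_2 + \norm{x_{i_k} - \mathbf{\tilde R} y_{j_k}}_2$. Summing over $k = 1,\dots,d$ and applying the Cauchy--Schwarz inequality $\sum_{k=1}^d a_k \le \sqrt{d}\,\bigl(\sum_{k=1}^d a_k^2\bigr)^{1/2}$ separately to each of the two resulting sums converts them into $\sqrt{d}\,\delta$ and $\sqrt{d}\,\tilde\delta$ by the definitions \eqref{eq:delta1}--\eqref{eq:delta2}. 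Chaining this with the display above gives exactly $\norm{(\mathbf R - \mathbf{\tilde R})y_l}_2 \le \sqrt{d}(\tilde\delta + \delta)$.

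I do not expect a real obstacle here; the argument is essentially two triangle inequalities plus Cauchy--Schwarz once the representation from Theorem~\ref{theorem:infinitynorm} is in place. The one point requiring care is the first step: one must expand $(\mathbf R - \mathbf{\tilde R})y_l$ as a combination of the columns $(\mathbf R - \mathbf{\tilde R})y_{j_k}$ with $\ell^\infty$-bounded coefficients, rather than attempting to bound it through an operator norm of $\mathbf R - \mathbf{\tilde R}$ (which carries no useful information, as both matrices are orthogonal). The determinant-maximising choice of $j_1,\dots,j_d$ enters precisely through Theorem~\ref{theorem:infinitynorm}, which is what guarantees $\norm{v_l}_\infty \le 1$ and thereby keeps the bound independent of the geometry of $\mathbf y$.
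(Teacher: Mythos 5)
Your proof is correct and follows essentially the same route as the paper's: the paper likewise combines the representation $y_l=\sum_k (v_l)_k y_{j_k}$ from Theorem~\ref{theorem:infinitynorm} with the insertion of $x_{i_k}$, the triangle inequality, and the bound $\sum_{k=1}^d a_k \le \sqrt{d\sum_{k=1}^d a_k^2}$, merely carrying out the two steps in the opposite order. No gaps.
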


\begin{proof}
    Using the inequality $\sum_i^d |a_i| \leq \sqrt{d\sum_i^d a_i^2}$ we obtain
    \begin{align*}
        \sum_{k=1}^d &\norm{(\mathbf R - \mathbf{\tilde R}) y_{j_k}}_2 
            = \sum_{k=1}^d \norm{(\mathbf R - \mathbf{\tilde R}) y_{\pi_{i_k}}}_2 \\
            &= \sum_{k=1}^d \norm{(\mathbf R - \mathbf{\tilde R}) y_{\pi_{i_k}} + x_{i_k} - x_{i_k}}_2 \\
            &\leq \sum_{k=1}^d \norm{x_{i_k} - \mathbf{\tilde R} y_{\pi_{i_k}}}_2 + \norm{x_{i_k} - \mathbf R y_{\pi_{i_k}}}_2  \\
            &\leq \sqrt{d\sum_{k=1}^d \norm{x_{i_k} - \mathbf{\tilde R} y_{\pi_{i_k}}}_2^2} \\
            &~~ + \sqrt{d\sum_{k=1}^d \norm{x_{i_k} - \mathbf R y_{\pi_{i_k}}}_2^2 }\\
            &\leq \sqrt{d}(\tilde \delta + \delta).
    \end{align*}
    Using theorem \ref{theorem:infinitynorm} we can write every $y_l$ as
    \begin{displaymath}
        y_l = \mathbf{\tilde Y} v_l = \sum_{k=1}^d (v_l)_k y_{j_k}
    \end{displaymath}
    where $\norm{v_l}_\infty\leq 1$. Thus we obtain for all $l=1,\dots, n$
    \begin{align*}
        \norm{(\mathbf R - \mathbf{\tilde R}) y_l}_2 
            &= \norm{(\mathbf R - \mathbf{\tilde R})\sum_{k=1}^d (v_l)_k y_{j_k}}_2 \\
            &\leq \sum_{k=1}^d |(v_l)_k|\, \norm{(\mathbf R - \mathbf{\tilde R})y_{j_k}}_2 \\
            &\leq \sum_{k=1}^d \norm{(\mathbf R - \mathbf{\tilde R})y_{j_k} }_2 \\
            &\leq \sqrt{d} (\tilde\delta + \delta).
    \end{align*}
\end{proof}

The next theorem is critical to the correctness of algorithm \ref{alg:IRMSD}.
If $\IRMSD(S, \tilde S)\leq\epsilon$ for suitably small $\epsilon$, it shows that the 
permutation $\pi$ that is calculated in line \ref{line:pi} is equal to the globally optimal
permutation $\pi^\ast$ for the indices $i_1,\dots, i_d$ that fulfill $\pi^\ast_{i_k} = j_k$.
As the algorithm tests \emph{all} index combinations $i_1,\dots, i_d$, 
one of them will yield the globally optimal permutation $\pi^\ast$.

\begin{theorem} \label{thm:algproof}
    Let $S = (\mathbf x, \mathbf e)$ and $\tilde S = (\mathbf{\tilde x}, \mathbf{\tilde e})$ be two particle systems
    such that $\IRMSD(S, \tilde S) \leq\epsilon$.
    We denote by $n$ the number of particles in the systems and by $d$ their dimensionality.
    Thus $\mathbf x = (x_1,\dots, x_n)$ consists of $n$ $d$-dimensional points $x_i\in\R^d$ (similar for $\mathbf{\tilde x}$).
    By $\mathbf R^\ast\in\mathbb{O}_d$ and $\pi^\ast\in \mathbb S_n$ we denote the optimal orthogonal matrix and the optimal
    permutation in the calculation of the $\IRMSD$ (see equation \ref{eq:IRMSD}).

    Now assume that $\mathbf{\tilde x} = (\tilde x_1,\dots, \tilde x_n)$ fulfils the assumptions in theorem \ref{theorem:infinitynorm} and 
    that $j_1,\dots, j_d$ are the corresponding indices that maximise equation \ref{eq:detmax}.
    Define the $d$ indices $i_1,\dots, i_k$ implicitly by $\pi^{\ast}_{i_k} = j_k$ ($\forall k=1,\dots, d$)
    and $\mathbf R\in\mathbb{O}_d$ by
    \begin{equation}
        \mathbf R = \argmin_{\mathbf R\in\mathbb{O}_d} \sum_{k=1}^d \norm{x_{i_k} - \mathbf R \tilde x_{j_k} }_2^2.
    \end{equation}
    Let $\mu\in\R$ be the minimal particle distance in $\tilde S$, i.e.
    \begin{equation}
        \mu = \min_{i\neq j} \norm{\tilde x_i - \tilde x_j}_2.
    \end{equation}
    If $\epsilon< \frac{\mu}{2\sqrt{1 + 4d}}$, then $\pi^\ast$ is equal to the 
    permutation $\pi$ that is defined by 
    \begin{equation} \label{eq:piopt}
    \begin{aligned}
        \pi_{i_k} &= j_k && \forall k=1,\dots, d \\
        \pi_k &= \argmin_{l:\ e_k = \tilde e_l} \norm{x_k - \mathbf R\tilde x_l}_2, &&\forall k\not\in\{i_1,\dots, i_d\}.
    \end{aligned}
    \end{equation}
\end{theorem}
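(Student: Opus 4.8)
The plan is to show that the orthogonal transformation $\mathbf R$ built from the $d$ reference points $\tilde x_{j_1},\dots,\tilde x_{j_d}$ is necessarily close to the globally optimal $\mathbf R^\ast$, and then to exploit the lower bound $\mu$ on the particle distances in $\tilde S$: once the image points $\mathbf R\tilde x_l$ are pairwise separated by roughly $\mu$ while each $\mathbf R\tilde x_{\pi^\ast_k}$ sits within $O(\sqrt d\,\epsilon)$ of $x_k$, the nearest-neighbour assignment in equation \ref{eq:piopt} has no choice but to return the globally optimal partner $\pi^\ast_k$.

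Concretely, I would set $\delta^\ast = \IRMSD(S,\tilde S)$, put $a_k = \norm{x_k - \mathbf R^\ast\tilde x_{\pi^\ast_k}}_2$ (for the pre-centred coordinates), and record the two basic facts $\sum_{k=1}^n a_k^2 = (\delta^\ast)^2 \le \epsilon^2$ and $e_k = \tilde e_{\pi^\ast_k}$ for every $k$ (the latter because $\IRMSD(S,\tilde S)<\infty$). I then instantiate theorem \ref{theorem:roterror} with $\mathbf y = \tilde{\mathbf x}$, $\mathbf{\tilde R} = \mathbf R^\ast$ and $\pi = \pi^\ast$; this is admissible since $\pi^\ast_{i_k}=j_k$ and the $i_k$ are distinct ($\pi^\ast$ being a bijection). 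In the notation of that theorem, $\tilde\delta^2 = \sum_{k=1}^d a_{i_k}^2$, and the defining optimality of $\mathbf R$ for the $d$ reference points gives $\delta \le \tilde\delta$, so the theorem yields $\norm{(\mathbf R - \mathbf R^\ast)\tilde x_l}_2 \le \sqrt d\,(\delta+\tilde\delta) \le 2\sqrt d\,\tilde\delta$ for all $l$.

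Next, fix $k\notin\{i_1,\dots,i_d\}$. For the candidate $l=\pi^\ast_k$ the triangle inequality gives $\norm{x_k - \mathbf R\tilde x_{\pi^\ast_k}}_2 \le a_k + 2\sqrt d\,\tilde\delta$, whereas for any other candidate $l\neq\pi^\ast_k$ with $e_k=\tilde e_l$, the reverse triangle inequality together with the orthogonality of $\mathbf R$ and $\norm{\tilde x_{\pi^\ast_k}-\tilde x_l}_2 \ge \mu$ gives $\norm{x_k - \mathbf R\tilde x_l}_2 \ge \mu - a_k - 2\sqrt d\,\tilde\delta$. Hence it suffices to verify $\mu > 2a_k + 4\sqrt d\,\tilde\delta$. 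Because $k$ lies outside $\{i_1,\dots,i_d\}$, the number $a_k^2$ is not already part of $\tilde\delta^2$, so $a_k^2 + \tilde\delta^2 = a_k^2 + \sum_{m=1}^d a_{i_m}^2 \le \sum_{j=1}^n a_j^2 \le \epsilon^2$; Cauchy--Schwarz then bounds $2a_k + 4\sqrt d\,\tilde\delta \le \sqrt{(4+16d)(a_k^2+\tilde\delta^2)} \le 2\sqrt{1+4d}\,\epsilon$, which is strictly below $\mu$ exactly under the hypothesis $\epsilon < \frac{\mu}{2\sqrt{1+4d}}$. Thus $\pi^\ast_k$ is the unique minimiser, i.e.\ $\pi_k = \pi^\ast_k$; for $k\in\{i_1,\dots,i_d\}$ one has $\pi_{i_k}=j_k=\pi^\ast_{i_k}$ by construction, so $\pi=\pi^\ast$.

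The one point requiring care is the constant: the factor $\sqrt{1+4d}$ arises only because $a_k$ and $\tilde\delta$ are controlled jointly through $a_k^2+\tilde\delta^2 \le \epsilon^2$ rather than each bounded by $\epsilon$ separately --- doing the latter would merely give the weaker threshold $\epsilon<\frac{\mu}{2(1+2\sqrt d)}$. This joint bound rests on the bookkeeping identity $\tilde\delta^2=\sum_{m=1}^d a_{i_m}^2$, which is precisely why the index $k$ must be kept outside $\{i_1,\dots,i_d\}$ and why that trivial case is split off at the end.
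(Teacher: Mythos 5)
Your proof is correct and follows essentially the same route as the paper's: it invokes Theorem \ref{theorem:roterror} with $\tilde{\mathbf R}=\mathbf R^\ast$, $\pi=\pi^\ast$ to get $\norm{(\mathbf R-\mathbf R^\ast)\tilde x_l}_2\le 2\sqrt d\,\tilde\delta$, exploits $a_k^2+\tilde\delta^2\le\epsilon^2$ for $k\notin\{i_1,\dots,i_d\}$, and extracts the constant $\sqrt{1+4d}$ before using the separation $\mu$. The only cosmetic differences are that you obtain the constant via Cauchy--Schwarz rather than by maximising $\sqrt{\epsilon^2-\tilde\delta^2}+2\sqrt d\,\tilde\delta$ over $\tilde\delta$, and you make the uniqueness of the nearest neighbour explicit via the reverse triangle inequality instead of the paper's sphere-of-radius-less-than-$\mu/2$ argument.
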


\begin{proof}
    By definition, $\pi^\ast_{i_k} = \pi_{i_k}$ for all $k=1,\dots, d$.
    Thus it remains to show that $\pi^\ast_k = \pi_k$ for all $k\not\in\{i_1,\dots, i_d\}$.
    To do this we define 
    \begin{equation}
        \delta = \sqrt{\sum_{k=1}^d \norm{x_{i_k} - \mathbf R \tilde x_{j_k}}_2^2}
    \end{equation}    
    and
    \begin{equation}
        \tilde\delta = \sqrt{\sum_{k=1}^d \norm{x_{i_k} - \mathbf R^\ast \tilde x_{j_k}}_2^2}.
    \end{equation}
    Due to the definition of $\mathbf R$ we have $\delta\leq\tilde\delta\leq\epsilon$.
    Using theorem \ref{theorem:roterror}, we obtain
    \begin{equation}
        \norm{(\mathbf R - \mathbf R^\ast) \tilde x_k}_2 \leq \sqrt{d}(\delta + \tilde\delta) \leq 2\sqrt{d}\tilde\delta
    \end{equation}    
    for all $k=1,\dots, n$.
    For any $k\not\in\{i_1,\dots, i_d\}$ we can write
    \begin{align*}
        \epsilon^2 &\geq \IRMSD(S,\tilde S)^2 = \sum_{l=1}^n \norm{x_l - \mathbf R^\ast \tilde x_{\pi^\ast_l}}_2^2 \\
            &= \sum_{l=1}^d \norm{x_{i_l} - \mathbf R^\ast \tilde x_{j_l}}_2^2 + \sum_{l=1\atop l\not\in\{i_1,\dots, i_d\}}^n \norm{ x_l - \mathbf R^\ast \tilde x_{\pi^\ast_l}}_2^2 \\
            &= \tilde\delta^2 + \sum_{l=1\atop l\not\in\{i_1,\dots, i_d\}}^n \norm{ x_l - \mathbf R^\ast \tilde x_{\pi^\ast_l}}_2^2 \\
            &\geq \tilde\delta^2 + \norm{ x_k - \mathbf R^\ast \tilde x_{\pi^\ast_k}}_2^2.
    \end{align*}
    and thus obtain:
    \begin{equation}
        \norm{x_k - \mathbf R^\ast \tilde x_{\pi^\ast_k}}_2 \leq \sqrt{\epsilon^2 - \tilde\delta^2}.
    \end{equation}
    Putting all things together we obtain for $k\not\in\{i_1,\dots, i_d\}$:
    \begin{equation} \label{eq:final}
        \begin{aligned}
        \norm{x_k - \mathbf R\tilde x_{\pi^\ast_k}}_2 &= \norm{x_k + \mathbf R^\ast \tilde x_{\pi^\ast_k} - \mathbf R^\ast \tilde x_{\pi^\ast_k} - \mathbf R\tilde x_{\pi^\ast_k}}_2 \\
            &\leq \norm{x_k - \mathbf R^\ast \tilde x_{\pi^\ast_k}}_2 + \norm{ (\mathbf R - \mathbf R^\ast) \tilde x_{\pi^\ast_k}}_2 \\
            &\leq \sqrt{\epsilon^2 - \tilde\delta^2} + 2\sqrt{d}\tilde\delta \\
            &\leq \sqrt{1 + 4d}\epsilon, 
        \end{aligned}
    \end{equation}
    where the final inequality can be obtained by maximising $\sqrt{\epsilon^2 - \tilde\delta^2} + 2\sqrt{d}\tilde\delta$
    with respect to $\tilde\delta$.
    Equation \ref{eq:final} shows that $\mathbf R\tilde x_{\pi^\ast_k}$ is located in a sphere of radius $\sqrt{1+4d}\epsilon$ around $x_k$.
    As $\epsilon < \frac{\mu}{2\sqrt{1 + 4d}}$, $\tilde x_{\pi^\ast_k}$ is the only point in $\tilde x$ that lies inside this
    radius and thus $\pi^\ast$ minimises equation \ref{eq:piopt}.
\end{proof}

The following lemma and the subsequent theorem are needed by theorem \ref{thm:compproof}
which is needed to prove the complexity of algorithm \ref{alg:IRMSD}.

\begin{lemma} (Taken from \cite{kressner2019, rademacher2006}) \label{lemma:index}
    Let $\mathbf A\in\R^{d\times d}$ be a square matrix and denote by $\mathbf{A(i)}$ the matrix 
    $\mathbf{A}$ with its $i$-th column removed.
    Then some index $k\in\{1,\dots, d\}$ exist such that 
\begin{displaymath}
    \norm{\mathbf{A} - \mathbf{A(k)} \mathbf{A(k)}^+ \mathbf{A}}_F \leq \sqrt{d}\sigma_d(\mathbf A).
\end{displaymath}
    Here, $\sigma_d(\mathbf A)$ denotes the $d$-th largest (i.e.\ smallest) 
    singular value of $\mathbf A$ and $\mathbf{A(k)}^+$ is the Moore-Penrose pseudo-inverse of 
    $\mathbf{A(k)}$.
\end{lemma}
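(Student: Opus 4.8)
The plan is to reduce the statement to a short fact about the rows of $\mathbf A^{-1}$ and then apply the elementary bound $\norm{\mathbf M}_2\le\norm{\mathbf M}_F$. First note that $\mathbf{A(k)}\,\mathbf{A(k)}^+$ is the orthogonal projector onto the column space of $\mathbf{A(k)}$, which is $\operatorname{span}\{a_j:j\neq k\}$, where $a_1,\dots,a_d$ denote the columns of $\mathbf A$. Hence $I-\mathbf{A(k)}\,\mathbf{A(k)}^+$ kills every column of $\mathbf A$ except the $k$-th, so $\mathbf A-\mathbf{A(k)}\,\mathbf{A(k)}^+\mathbf A$ has a single nonzero column and
\[
  \norm{\mathbf A-\mathbf{A(k)}\,\mathbf{A(k)}^+\mathbf A}_F
  = \norm{\bigl(I-\mathbf{A(k)}\,\mathbf{A(k)}^+\bigr)a_k}_2 =: r_k ,
\]
i.e.\ $r_k$ is the distance of $a_k$ from $\operatorname{span}\{a_j:j\neq k\}$. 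The claim to be proved is therefore $\min_k r_k\le\sqrt{d}\,\sigma_d(\mathbf A)$. If $\mathbf A$ is singular then $\sigma_d(\mathbf A)=0$ and, since the columns are linearly dependent, some $a_k$ lies in the span of the remaining ones, giving $r_k=0$; so from now on I would assume $\mathbf A$ invertible.

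Next comes the key identity. Let $c_1^\top,\dots,c_d^\top$ be the rows of $\mathbf A^{-1}$, so that $c_k^\top a_j=\delta_{kj}$. Then each $c_k$ is nonzero and orthogonal to the $(d-1)$-dimensional space $\operatorname{span}\{a_j:j\neq k\}$, hence spans its orthogonal complement; the projector onto that complement is $c_kc_k^\top/\norm{c_k}_2^2$. Since $I-\mathbf{A(k)}\,\mathbf{A(k)}^+$ is exactly this projector, applying it to $a_k$ and using $c_k^\top a_k=1$ yields
\[
  r_k = \norm{\frac{c_kc_k^\top a_k}{\norm{c_k}_2^2}}_2 = \frac{1}{\norm{c_k}_2},
  \qquad\text{so}\qquad \min_k r_k = \frac{1}{\max_k\norm{c_k}_2}.
\]

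Finally I would bound $\max_k\norm{c_k}_2$ from below. Since
\[
  \norm{\mathbf A^{-1}}_F^2 = \sum_{k=1}^d\norm{c_k}_2^2 \le d\,\max_k\norm{c_k}_2^2
  \qquad\text{and}\qquad
  \norm{\mathbf A^{-1}}_F \ge \norm{\mathbf A^{-1}}_2 = \frac{1}{\sigma_d(\mathbf A)},
\]
we get $\max_k\norm{c_k}_2\ge 1/\bigl(\sqrt d\,\sigma_d(\mathbf A)\bigr)$ and therefore $\min_k r_k\le\sqrt d\,\sigma_d(\mathbf A)$, which is the assertion.

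Conceptually everything here is elementary; the one place that needs care is the identity $r_k=1/\norm{c_k}_2$, which hinges on recognizing that the $k$-th row of $\mathbf A^{-1}$ is precisely a nonzero basis vector of the one-dimensional orthogonal complement of $\operatorname{span}\{a_j:j\neq k\}$ — once that is seen, the projector is explicit and the distance is immediate. An alternative route writes $r_k$ as $|\det\mathbf A|$ divided by the $(d-1)$-volume of $\mathbf{A(k)}$ and then lower-bounds the largest such minor volume in terms of the singular values of $\mathbf A$; I would expect the inverse-rows argument above to be the shorter of the two, and it is also the one implicit in the cited references.
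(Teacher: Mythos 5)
Your proof is correct. Note that the paper itself gives no proof of this lemma --- it is imported verbatim from the cited references --- so there is nothing internal to compare against; your argument fills that gap and is self-contained. The two load-bearing steps both check out: (i) since $\mathbf{A(k)}\mathbf{A(k)}^+$ is the orthogonal projector onto $\operatorname{span}\{a_j : j\neq k\}$, the residual matrix has only its $k$-th column nonzero, so its Frobenius norm is the distance $r_k$ of $a_k$ to that span; and (ii) in the invertible case that span is genuinely $(d-1)$-dimensional, its orthogonal complement is spanned by the $k$-th row $c_k$ of $\mathbf A^{-1}$ (because $c_k^\top a_j=\delta_{kj}$), whence $r_k = 1/\norm{c_k}_2$ and the pigeonhole bound $\max_k\norm{c_k}_2 \geq \norm{\mathbf A^{-1}}_F/\sqrt d \geq 1/(\sqrt d\,\sigma_d(\mathbf A))$ finishes it. This inverse-rows argument is essentially the standard one in the column-selection literature the paper cites; the determinant/volume route you mention in closing gives the same bound but requires Cramer-type identities that your version avoids. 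The singular case is correctly dispatched separately. No gaps.
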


\begin{theorem}\label{thm:rotdist}
    Let $\mathbf A\in\R^{d\times d}$ be non-singular and $k\in\{1,\dots, d\}$ be an index
    such that 
\begin{displaymath}
    \norm{\mathbf A - \mathbf{A(k)} \mathbf{A(k)}^+ \mathbf A}_F \leq \sqrt{d}\sigma_d(\mathbf A)
\end{displaymath}
    (due to lemma \ref{lemma:index} at least one such index exists).
    Then for each orthogonal matrix $\mathbf U\in\mathbb{O}_d$ at least one of the following inequalities
    holds
\begin{equation} \label{eq:ineq1}
    \norm{\mathbf A_{\ast, k} - \mathbf U \mathbf A_{\ast, k}}_2 \leq (1 +\sqrt{2})\sqrt{d} \norm{\mathbf U \mathbf{A(k)} - \mathbf{A(k)}}_F
\end{equation}
\begin{equation}
    \norm{2\mathbf{A(k)}\mathbf{A(k)}^+ \mathbf A_{\ast, k} - \mathbf{A}_{\ast, k}  - \mathbf U \mathbf A_{\ast, k}}_2 \leq (1 +\sqrt{2})\sqrt{d} \norm{\mathbf U \mathbf{A(k)} - \mathbf{A(k)}}_F.
\end{equation}
\end{theorem}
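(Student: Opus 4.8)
The plan is to show that $\mathbf{U}\mathbf{A}_{\ast,k}$ must lie close either to $\mathbf{A}_{\ast,k}$ itself or to its mirror image across the hyperplane $V:=\operatorname{col}(\mathbf{A}(k))$, with \eqref{eq:ineq1} corresponding to the first case and the second displayed inequality to the mirror case. Write $a:=\mathbf{A}_{\ast,k}$, let $P:=\mathbf{A}(k)\mathbf{A}(k)^{+}$ be the orthogonal projector onto $V$ (which is $(d-1)$-dimensional, since the non-singularity of $\mathbf{A}$ forces $\mathbf{A}(k)$ to have full column rank), put $h:=(I-P)a$, and let $n$ be a unit normal of $V$, so $h=\pm\|h\|_2\,n$. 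Since $P$ fixes every column of $\mathbf{A}$ except the $k$-th one, the matrix $\mathbf{A}-\mathbf{A}(k)\mathbf{A}(k)^{+}\mathbf{A}=(I-P)\mathbf{A}$ has all columns zero except the $k$-th, which equals $h$; hence the hypothesis on $k$ reads precisely $\|h\|_2\le\sqrt{d}\,\sigma_d(\mathbf{A})$. Let $\sigma:=2P-I$ be the reflection across $V$. Replacing $\mathbf{U}$ by $\sigma\mathbf{U}$ if necessary — this leaves $\|\mathbf{U}\mathbf{A}(k)-\mathbf{A}(k)\|_F$ unchanged (as $\sigma$ is orthogonal and fixes $\mathbf{A}(k)$), interchanges the two inequalities of the theorem (apply the isometry $\sigma$), and flips the sign of $s:=n^{T}\mathbf{U}n$ — I may assume $s\ge 0$ and prove only \eqref{eq:ineq1}. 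Setting $\varepsilon_0:=\|\mathbf{U}\mathbf{A}(k)-\mathbf{A}(k)\|_F$, I then split $\mathbf{U}a-a=(\mathbf{U}-I)Pa+(\mathbf{U}-I)h$ and bound the two summands by $\sqrt{d}\,\varepsilon_0$ and $\sqrt{2}\,\sqrt{d}\,\varepsilon_0$ respectively; adding gives the desired $(1+\sqrt{2})\sqrt{d}\,\varepsilon_0$.

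For the first summand, write $Pa=\mathbf{A}(k)c$ with $c:=\mathbf{A}(k)^{+}a$; then $(\mathbf{U}-I)Pa=\bigl(\mathbf{U}\mathbf{A}(k)-\mathbf{A}(k)\bigr)c$, so $\|(\mathbf{U}-I)Pa\|_2\le\varepsilon_0\|c\|_2$, and it suffices to prove $\|c\|_2\le\sqrt{d}$ (in fact one gets $\sqrt{d-1}$). This is the crucial step: the naive estimate $\|c\|_2\le\|\mathbf{A}(k)^{+}\|_2\|a\|_2$ is useless because it blows up when $\mathbf{A}(k)$ is ill-conditioned. Instead I would consider $\mathbf{B}:=[\,\mathbf{A}(k)\mid a\,]$, i.e.\ $\mathbf{A}$ with its $k$-th column moved to the last position; $\mathbf{B}$ is invertible with the same singular values as $\mathbf{A}$. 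Reading off $\mathbf{B}^{-1}\mathbf{B}=I$ shows that $\mathbf{B}^{-1}\mathbf{A}(k)$ consists of the first $d-1$ columns of the identity and that $\mathbf{B}^{-1}a$ is the last column of the identity; applying $\mathbf{B}^{-1}$ to $a=\mathbf{A}(k)c+h$ therefore yields a vector $\mathbf{B}^{-1}h$ whose first $d-1$ entries are $-c$ and whose last entry is $1$, so that $\|c\|_2^2+1=\|\mathbf{B}^{-1}h\|_2^2\le\|\mathbf{B}^{-1}\|_2^2\,\|h\|_2^2=\|h\|_2^2/\sigma_d(\mathbf{A})^2\le d$ by the hypothesis on $k$.

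For the second summand, expanding $\|(\mathbf{U}-I)h\|_2^2=2\|h\|_2^2-2h^{T}\mathbf{U}h$ and using $h^{T}\mathbf{U}h=\|h\|_2^2\,n^{T}\mathbf{U}n=\|h\|_2^2 s$ gives $\|(\mathbf{U}-I)h\|_2^2=2\|h\|_2^2(1-s)$; since $0\le s\le 1$ it is enough to control $1-s^2$. Put $q:=\mathbf{U}^{T}n$, a unit vector with $q^{T}n=s$, so its component $Pq$ in $V$ satisfies $\|Pq\|_2^2=1-s^2$. Because the columns of $\mathbf{A}(k)$ are orthogonal to $n$, one has $\mathbf{A}(k)^{T}q=\mathbf{A}(k)^{T}(Pq)=\bigl(\mathbf{U}\mathbf{A}(k)-\mathbf{A}(k)\bigr)^{T}n$, whence $\|\mathbf{A}(k)^{T}q\|_2\le\varepsilon_0$; on the other hand $\|\mathbf{A}(k)^{T}(Pq)\|_2\ge\sigma_{d-1}(\mathbf{A}(k))\,\|Pq\|_2$, since $\mathbf{A}(k)^{T}$ is injective on $V$ with least singular value $\sigma_{d-1}(\mathbf{A}(k))$. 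Combining gives $1-s^2=\|Pq\|_2^2\le\varepsilon_0^2/\sigma_{d-1}(\mathbf{A}(k))^2$, and since singular values interlace under deletion of a column we have $\|h\|_2\le\sqrt{d}\,\sigma_d(\mathbf{A})\le\sqrt{d}\,\sigma_{d-1}(\mathbf{A}(k))$, so $\|(\mathbf{U}-I)h\|_2^2\le 2d\,\varepsilon_0^2$, as claimed.

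The remaining ingredients — the two matrix identities above, the elementary inequality $1-s\le 1-s^2$ on $[0,1]$, and the interlacing of singular values — are routine, so the single genuinely delicate point is the bound $\|\mathbf{A}(k)^{+}a\|_2\le\sqrt{d-1}$. The trick that makes it work is to recognise $c=\mathbf{A}(k)^{+}a$ as (minus) the top block of $\mathbf{B}^{-1}h$ for the completed square matrix $\mathbf{B}=[\,\mathbf{A}(k)\mid a\,]$, which ties the estimate directly to $\|h\|_2/\sigma_d(\mathbf{A})$ and hence to exactly the hypothesis that singles out the index $k$.
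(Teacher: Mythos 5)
Your proof is correct and follows essentially the same route as the paper's: the same split of $\mathbf A_{\ast,k}$ into its projection onto $\operatorname{col}(\mathbf{A(k)})$ plus a normal component, the same $\sqrt{d-1}$ bound on the coefficient vector obtained by applying the inverse of the completed square matrix to that normal component, and the same use of $\sigma_{d-1}(\mathbf{A(k)})\geq\sigma_d(\mathbf A)$ together with the hypothesis on $k$ to bound how far $\mathbf U$ can tilt the normal direction. Your reflection trick for reducing the second inequality to the first, and your choice to work with $\mathbf U^T n$ directly rather than expanding $\mathbf{Uc}$ in the frame $\{\mathbf{A(k)},\mathbf c\}$ and solving a quadratic for the coefficient $a_\pm$, are only cosmetic streamlinings of the paper's argument.
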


\begin{proof}
    To simplify the notation, it is assumed w.l.o.g. that $k = d$.
    As $\mathbf A$ is non-singular, its $d$-th column $\mathbf A_{\ast, d}$ can also be written as
\begin{displaymath}
    \mathbf A_{\ast, d} = \mathbf{A(d)} \underbrace{\mathbf{A(d)}^+ \mathbf A_{\ast, d}}_{\eqqcolon \mathbf v} + \mathbf c
\end{displaymath}
    where $\mathbf c$ is from the (one-dimensional) nullspace of $\mathbf{A(d)}^T$.
    Setting $\mathbf{\delta(U)}\coloneqq U\mathbf{A(k)} - \mathbf{A(k)}$ immediately yields
\begin{displaymath}
    U \mathbf{A(d)v} - \mathbf{A(d) v} = \mathbf{\delta(U) v}
\end{displaymath}
    and thus
\begin{displaymath}
    \norm{U \mathbf{A(d)v} - \mathbf{A(d) v}}_2 \leq \norm{\mathbf{\delta(U)}}_F \norm{\mathbf v}_2.
\end{displaymath}
    The norm of $\mathbf v$ can be bounded from above as follows ($e_d$ is the $d$-th unit vector)
\begin{align*}
    \norm{e_d - [\mathbf I\, 0]^T \mathbf v}_2 &= \norm{e_d - \mathbf A^{-1} \mathbf{A(d) A(d)}^+ \mathbf Ae_d}_2 \\
         &=\norm{\mathbf A^{-1} (\mathbf Ae_d - \mathbf{A(d) A(d)}^+ \mathbf Ae_d)}_2 \\ 
         &\leq \norm{\mathbf A^{-1}}_2 \norm{\mathbf A - \mathbf{A(d)A(d)}^+ \mathbf A}_F \norm{e_d}_2 \\
         &\leq \frac{1}{\sigma_d(\mathbf A)} \sqrt{d}\sigma_d(\mathbf A) 
         = \sqrt{d} \\
    \Rightarrow  \norm{e_d - [\mathbf I\, 0]^T \mathbf v}_2^2 &=  \norm{\left(\begin{array}{c}
        \mathbf v \\ 1
    \end{array}\right)}_2^2 \leq d \Rightarrow \norm{\mathbf v}_2^2 \leq d - 1.
\end{align*}
    The next step is to prove a similar inequality for $\norm{\mathbf c - \mathbf{Uc}}_2$
    Due to the properties of orthogonal matrices one can easily verify that $\mathbf{\tilde c}\coloneqq \mathbf{Uc}$ must
    have the same Euclidean norm as $\mathbf c$ and must be in the nullspace of $(\mathbf{UA(d)})^T$. 
    Therefore $\mathbf{\tilde c}$ must fulfil
\begin{align}
    (\mathbf{UA(d)})^T \mathbf{\tilde c} &= (\mathbf{A(d)} + \mathbf{\delta(U)})^T \mathbf{\tilde c}  = 0 \label{eq:rotdist:eq1}\\
    \mathbf{\tilde c}^T \mathbf{\tilde c} &= \mathbf c^T \mathbf c\label{eq:rotdist:eq2}.
\end{align}
    Using once again the fact that $\mathbf{A(d)}$ and $\mathbf c$ span the whole $\R^d$, 
    $\mathbf{\tilde c}$ can be written as $\mathbf{\tilde c} = \mathbf{A(d) w} + a\mathbf c$ for some (yet to be determined)
    $\mathbf w\in\R^{d-1}$ and $a\in\R$.
    Due to equation \eqref{eq:rotdist:eq1}, $\tilde c$ must fulfil
\begin{displaymath}
    \mathbf{A(d)}^T \mathbf{\tilde c} = -\mathbf{\delta(U)}^T\mathbf{\tilde c}
\end{displaymath}
and thus
\begin{displaymath}
    \mathbf{A(d)}^T(\mathbf{A(d) w} + a \mathbf c) 
        = \mathbf{A(d)}^T \mathbf{A(d) w} = -\mathbf{\delta(U)}^T\mathbf{\tilde c}.
\end{displaymath}
    As $\mathbf{A(d)}^T \mathbf{A(d)}$ is invertible, $\mathbf w$ fulfils
\begin{displaymath}
    \mathbf w = - (\mathbf{A(d)}^T \mathbf{A(d)})^{-1} \mathbf{\delta(U)}^T\mathbf{\tilde c}
\end{displaymath}
    and thus (using $\norm{\mathbf{\tilde c}}_2 = \norm{\mathbf c}_2$)
\begin{equation}
\begin{aligned}
   \norm{\mathbf{A(d)w}}_2 &= \norm{-\mathbf{A(d)}(\mathbf{A(d)}^T\mathbf{A(d)})^{-1}\mathbf{\delta(U)}^T\mathbf{\tilde c}}_2 
        = \norm{-{\mathbf{A(d)}^T}^+ \mathbf{\delta(U)}^T\mathbf{\tilde c}}_2 \\
        &\leq \norm{\mathbf{{A(d)}^T}^+}_2 \norm{\mathbf{\delta(U)}^T}_F \norm{\mathbf{\tilde c}}_2 
        = \norm{\mathbf{A(d)}^+}_2 \norm{\mathbf{\delta(U)}}_F \norm{\mathbf c}_2.
\end{aligned} \label{eq:rotdist:eq5}
\end{equation}
    Once again using $\norm{\mathbf c}_2 = \norm{\mathbf{\tilde c}}_2$ one obtains
\begin{align*}
    \mathbf c^T \mathbf c &= \mathbf w^T \mathbf{A(d)}^T \mathbf{A(d) w} + a^2 \mathbf c^T \mathbf c + 2 a\mathbf w^T \mathbf{A(d)}^T \mathbf c\\
        &= \norm{\mathbf{A(d) w}}_2^2 + a^2 \mathbf c^T \mathbf c
\end{align*}
    which is a quadratic equation with respect to $a$ and thus has two solutions $a_+$ and $a_-$, 
    given by 
\begin{displaymath}
    a_\pm = \pm \sqrt{1 - \frac{\norm{\mathbf{A(d)w}}_2^2}{\mathbf c^T \mathbf c}}.
\end{displaymath}
    As $\norm{\mathbf{A(d) w}}_2$ is limited from above (see equation \eqref{eq:rotdist:eq5}), $a_+$ fulfils
\begin{displaymath}
    a_+  = \sqrt{1 - \frac{\norm{\mathbf{A(d)w}}_2^2}{\mathbf c^T \mathbf c}} 
        \geq 1 - \norm{\mathbf{A(d)}^+}_2^2\norm{\mathbf{\delta(U)}}_F^2
\end{displaymath}
    and $a_-\leq -\left(1 - \norm{\mathbf{A(d)}^+}_2^2\norm{\mathbf{\delta(U)}}_F^2\right)$.
    Now $\mathbf{\tilde c}$ is equal to either $\mathbf{A(d) w} + a_+ \mathbf c$
    or $\mathbf{A(d) w} + a_- \mathbf c$. In the first case, $\norm{\mathbf{\tilde c} - \mathbf c}_2^2$
    fulfils
\begin{align*}
    \norm{\mathbf{\tilde c} - \mathbf c}_2^2 &= 
        2 \mathbf c^T \mathbf c - \underbrace{\mathbf c^T \mathbf{A(d)w}}_{=0} - 2 a_+ \mathbf c^T \mathbf c 
        = 2 \mathbf c^T \mathbf c (1 - a_+) 
        \leq  2 \mathbf c^T \mathbf c \norm{\mathbf{A(d)}^+}_2^2 \norm{\mathbf{\delta(U)}}_F^2 \\
        &= 2\norm{\mathbf{A(d)}^+}_2^2 \norm{\mathbf{\delta(U)}}_F^2 \norm{\mathbf c}_2^2 \\
        &= 2\norm{\mathbf{A(d)}^+}_2^2 \norm{\mathbf{\delta(U)}}_F^2 \norm{\mathbf A_{\ast, d} - \mathbf{A(d)}\mathbf{A(d)}^+ \mathbf A_{\ast, d}}_2^2 \\
        &\leq 2\norm{\mathbf{A(d)}^+}_2^2 \norm{\mathbf{\delta(U)}}_F^2 \norm{\mathbf A - \mathbf{A(d)}\mathbf{A(d)}^+ \mathbf A}_F^2 \\
        &\leq 2\frac{1}{\sigma^2_{d-1}(\mathbf{A(d)})} \norm{\mathbf{\delta(U)}}_F^2 d\sigma^2_d(\mathbf A) \\
        &\leq 2 d \norm{\mathbf{\delta(U)}}_F^2
\end{align*}
    and thus equation \eqref{eq:ineq1} holds
\begin{align*}
    \norm{\mathbf{A(d)v} + \mathbf c - \mathbf U(\mathbf{A(d) v} + \mathbf c)}_2 
        &= \norm{-\mathbf{\delta(U) v} + \mathbf c - \mathbf{\tilde c}}_2 
        \leq \norm{\mathbf{\delta(U) v}}_2 + \norm{\mathbf c - \mathbf{\tilde c}}_2 \\
        &\leq \norm{\mathbf{\delta(U)}}_F \sqrt{d} + \sqrt{2d}\norm{\mathbf{\delta(U)}}_F \\
        &= (1 + \sqrt{2})\sqrt{d} \norm{\mathbf{\delta(U)}}_F.
\end{align*}
    If $\mathbf{\tilde c} = \mathbf{A(d) w} + a_- \mathbf c$, the second inequality can be derived
    using $\mathbf{A(d)v} - \mathbf c$ instead of $\mathbf{A(d)v} + \mathbf c$.
\end{proof}

\begin{theorem} \label{thm:compproof}
    Let $(x_1,\dots, x_n)$  be $n$ $d$-dimensional points ($x_i\in\R^d$) such that
    \begin{displaymath}
        \norm{x_i - x_j}_2\geq\mu, \, i\neq j
    \end{displaymath}
    for some $\mu > 0$.
    Furthermore let $\tilde x_1,\dots, \tilde x_d$  be $d$ $d$-dimensional points ($\tilde x_i\in\R^d$)
    that are linearly independent.
    Then the magnitude of the set
    \begin{displaymath}
        I = \left\{
            i\in\{1,\dots, n\}^d:\, \left(i_j\neq i_k\, \forall j\neq k\right) \wedge 
            \left(\exists \mathbf R\in\mathbb{O}_d:\, \sum_{l=1}^d \norm{x_{i_l} - \mathbf R\tilde x_l}_2^2 \leq\epsilon^2\right)
        \right\}
    \end{displaymath}
    is of order $O(2.415^d n^{d-1})$ if $\epsilon < \frac{\mu}{2\sqrt{1 + 4d}}$.
\end{theorem}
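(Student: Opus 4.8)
The plan is to bound the number of "good" index tuples $(i_1,\dots,i_d)$ by first fixing the last $d-1$ coordinates $(i_2,\dots,i_d)$ and then showing that only $O(2.415^d)$ choices of $i_1$ can extend such a prefix. Since there are at most $n^{d-1}$ choices for the prefix, this immediately gives $|I| = O(2.415^d n^{d-1})$. So the heart of the argument is: \emph{given a fixed orthogonal-transformation-compatible matching of $\tilde x_2,\dots,\tilde x_d$ onto $x_{i_2},\dots,x_{i_d}$, the point $x_{i_1}$ is constrained to lie in a region of bounded size, and since the $x_i$ are $\mu$-separated, only boundedly many of them fit.}

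First I would set up the machinery from Theorems \ref{thm:rotdist} and \ref{theorem:roterror}. Apply Lemma \ref{lemma:index} to the matrix $\mathbf A = [\tilde x_1\,\dots\,\tilde x_d]$ to obtain an index $k$ with $\norm{\mathbf A - \mathbf{A(k)}\mathbf{A(k)}^+\mathbf A}_F \le \sqrt d\,\sigma_d(\mathbf A)$. After relabeling, assume $k=1$ (this is why singling out $i_1$ is natural — the deleted column corresponds to the free index). Now suppose $(i_1,\dots,i_d)\in I$ with witness $\mathbf R$, and suppose a \emph{second} tuple $(i_1',i_2,\dots,i_d)\in I$ sharing the prefix, with witness $\mathbf R'$. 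Both $\mathbf R$ and $\mathbf R'$ approximately align $\tilde x_2,\dots,\tilde x_d$ (i.e.\ the columns $\mathbf{A(1)}$) onto the same targets $x_{i_2},\dots,x_{i_d}$, so $\norm{\mathbf R\mathbf{A(1)} - \mathbf R'\mathbf{A(1)}}_F$ is small — of order $\epsilon$ — by the triangle inequality through $[x_{i_2}\,\dots\,x_{i_d}]$. Writing $\mathbf U = \mathbf R^{-1}\mathbf R'$ (orthogonal), this says $\norm{\mathbf U\mathbf{A(1)} - \mathbf{A(1)}}_F \lesssim \epsilon$ after pulling back by $\mathbf R$ (orthogonal, norm-preserving). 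Theorem \ref{thm:rotdist} then tells us that $\mathbf U$ fixes the first column $\mathbf A_{\ast,1} = \tilde x_1$ up to an error $(1+\sqrt2)\sqrt d\,\norm{\mathbf U\mathbf{A(1)}-\mathbf{A(1)}}_F = O(\epsilon)$ — \emph{or} it reflects $\tilde x_1$ across the hyperplane spanned by $\mathbf{A(1)}$, again up to $O(\epsilon)$. Translating back: $\mathbf R\tilde x_1$ and $\mathbf R'\tilde x_1$ differ by $O(\epsilon)$, up to a two-fold (reflection) ambiguity.

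Next I would convert this into a constraint on $x_{i_1}$. Since $\norm{x_{i_1} - \mathbf R\tilde x_1}_2 \le \epsilon$ and $\norm{x_{i_1'} - \mathbf R'\tilde x_1}_2 \le \epsilon$, and $\mathbf R\tilde x_1,\mathbf R'\tilde x_1$ are within $O(\epsilon)$ of each other up to one of two reflected positions, every admissible $x_{i_1}$ lies within $O(\epsilon)$ of one of \emph{two} fixed points in $\R^d$ (determined by the prefix and one reference witness). Because the $x_i$ are pairwise $\mu$-separated and $\epsilon < \frac{\mu}{2\sqrt{1+4d}}$, a ball of radius $C\epsilon$ contains at most $(1 + 2C\epsilon/\mu)^d$-ish points by a volume/packing argument — crucially, with the constants from Theorem \ref{thm:rotdist} one checks the radius works out so that this count is $O(2.415^d)$ (the constant $2.415 \approx 1+\sqrt2$ betrays that it comes directly from the $(1+\sqrt2)$ factor there). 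Doubling for the two reflection centers only changes the constant. Multiplying the $O(2.415^d)$ bound on $i_1$ by the $n^{d-1}$ prefixes finishes the proof.

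The main obstacle I anticipate is twofold. First, bookkeeping the \emph{reflection ambiguity} cleanly: Theorem \ref{thm:rotdist} gives an "either/or", so one must argue that across \emph{all} tuples sharing a prefix, the set of possible $\mathbf R\tilde x_1$ positions is covered by two small balls, not a growing family — this requires fixing one reference witness $\mathbf R_0$ and comparing everything to it, so the $O(\epsilon)$ errors don't accumulate. Second, pinning down the exact constant so that the packing count comes out as $2.415^d$ rather than something larger: one needs the radius of the ball containing the admissible $x_{i_1}$ to be a specific multiple of $\mu$, and this is where the precise form of the hypothesis $\epsilon < \frac{\mu}{2\sqrt{1+4d}}$ (rather than just $\epsilon = O(\mu)$) gets used — the $\sqrt{1+4d}$ in the denominator is presumably calibrated so that the combined error $\sqrt d(\delta+\tilde\delta) + (1+\sqrt2)\sqrt d(\cdots)$ stays below the threshold that makes the packing number exactly $\lceil 1+\sqrt2\rceil^d$ or a comparable clean expression. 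Getting that chain of constants to close is the delicate part; everything else is triangle inequalities and a standard separated-points packing bound.
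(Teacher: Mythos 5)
Your proposal is correct and follows essentially the same route as the paper: fix a $(d-1)$-index prefix (chosen via Lemma \ref{lemma:index} so the freed index matches the deleted column), bound the Frobenius distance between any two witness transformations restricted to the shared columns by $O(\epsilon)$ via the triangle inequality, invoke Theorem \ref{thm:rotdist} against a single reference witness to confine the image of the remaining column to two balls of radius $2(1+\sqrt{2})\sqrt{d}\,\epsilon$, and use $\mu$-separation plus $\epsilon<\frac{\mu}{2\sqrt{1+4d}}$ to get at most $2\cdot 2.415^d$ completions per prefix. You correctly identified every ingredient, including the two-ball reflection ambiguity and the origin of the constant $2.415\approx 1+\sqrt{2}$, so there is nothing substantive to add.
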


\begin{proof}
    To begin with, let $\mathbf{\tilde X} = [\tilde x_1,\dots, \tilde x_n]$ be the matrix 
    that has the vectors $\tilde x_i$ as columns.
    Then $\mathbf{\tilde X}$ fulfils all conditions of lemma \ref{lemma:index} and therefore some
    index $k$ exists such that
\begin{displaymath}
    \norm{\mathbf{\tilde X} - \mathbf{\tilde X(k)\tilde X(k)}^+ \mathbf{\tilde X}}_F \leq \sqrt{d}\sigma_d(\mathbf{\tilde X}).
\end{displaymath}
    W.l.o.g. it is assumed here that $k = d$,
    otherwise one can simply permute the $\tilde x_i$ and $x_i$.
    Then every $i\in I$ can be written as $i = (j, l)$ where $j\in\{1,\dots, n\}^{d-1}$ and
    $l\in\{1,\dots, n\}$.
    If $(j, l)\in I$, $j$ must fulfil
\begin{displaymath}
    \left(j_m\neq j_p\, \forall m\neq p\right) \wedge \left(\exists \mathbf R\in\mathbb{O}_d:\, \sum_{m=1}^{d-1} \norm{x_{j_m} - \mathbf R\tilde x_m}_2^2\leq\epsilon^2
 \right).
\end{displaymath}
    Due to the condition that all entries of $j$ must be distinct, there are only up to $\binom{n}{d-1} = O(n^{d-1})$ possible 
    tuples $j$. 
    It is now shown that for each such $j$, no more than $2\cdot 2.415^d$ possible values $\tilde l$ exist such that 
    $(j, \tilde l)\in I$.

    To prove this, it can easily be observed that for $i = (j, l)\in I$ the relation
\begin{align*}
    &\underbrace{\left\{\mathbf R\in\mathbb{O}_d:\, \sum_{m=1}^{d-1} \norm{x_{j_m} - \mathbf R\tilde x_m}_2^2\leq\epsilon^2\right\}}_{=\mathcal{R}_j} \supseteq \\
    &\quad \left\{\mathbf R\in\mathbb{O}_d:\, \sum_{m=1}^{d-1} \norm{x_{j_m} - \mathbf R\tilde x_m}_2^2 + \norm{x_l - \mathbf R\tilde x_d}_2^2\leq\epsilon^2\right\}
\end{align*}
    holds.
    Furthermore, all $\mathbf R, \mathbf{\tilde R}\in\mathcal{R}_j$ fulfil
\begin{equation} \label{eq:n2:tmp1}
    \norm{\mathbf R\mathbf{\tilde X(d)} - \mathbf{\tilde R\tilde X(d)}}_F \leq 2\epsilon.
\end{equation}
    Therefore, choosing an arbitrary $\mathbf R\in\mathcal{R}_j$ and setting $\mathbf A = \mathbf{R\tilde X}$
    in theorem \ref{thm:rotdist} yields two vectors $\mathbf y\in\R^d$ and $\mathbf{\tilde y}\in\R^d$
    such that $\forall \mathbf{\tilde R}\in\mathcal{R}_j$ either $\mathbf y$ or $\mathbf{\tilde y}$ fulfil
\begin{align*}
    \norm{\mathbf  y - \mathbf{\tilde R\tilde X}_{\ast, d}}_2 & = \norm{\mathbf y - \mathbf{\tilde R R}^T (\mathbf{R \tilde X}_{\ast, d})}_2 
    \leq (1 + \sqrt{2})\sqrt{d} \norm{\mathbf{R\tilde X}(d) - \mathbf{\tilde RR}^T (\mathbf R \mathbf{\tilde X(d)})}_F \\
    &= (1 + \sqrt{2})\sqrt{d} \norm{\mathbf{R\tilde X(d)} - \mathbf{\tilde R\tilde X(d)}}_F.
\end{align*}
    Due to equation \eqref{eq:n2:tmp1} this leads to either
\begin{displaymath}
    \norm{\mathbf y - \mathbf{\tilde R}\tilde x_d}_2 \leq 2 (1 + \sqrt{2})\sqrt{d}\epsilon
\end{displaymath}
    or
\begin{displaymath}
    \norm{\mathbf{\tilde y} - \mathbf{\tilde R}\tilde x_d}_2 \leq 2 (1 + \sqrt{2})\sqrt{d}\epsilon
\end{displaymath}
    Due to these inequalities, each feasible $\mathbf{\tilde R}\in\mathcal{R}_j$ transforms $\tilde x_d$ into a 
    sphere of radius $2(1 + \sqrt{2})\sqrt{d}\epsilon$ around either $\mathbf y$ or $\mathbf{\tilde y}$.
    Thus, the corresponding $x_d$ must also lie in one of these spheres.    
    As the $x_i$ were assumed to fulfil $\norm{x_i - x_j}_2\geq\mu$, only a limited number of $x_i$-s 
    can lie in these spheres. 
    As the diameter of each sphere is no more than
\begin{displaymath}
    \frac{2\cdot 2(1 + \sqrt{2})\sqrt{d}\epsilon}{\mu} <
        \frac{2\cdot 2(1 + \sqrt{2})\sqrt{d}\epsilon}{2\sqrt{1 + 4d}\epsilon} =
        \frac{2(1 + \sqrt{2})\sqrt{d}}{\sqrt{d}\sqrt{\frac{1}{d} + 4}}
        \leq \frac{2 + 2\sqrt{2}}{\sqrt{4}} \leq 2.415
\end{displaymath}
    times larger than $\mu$ -- and thus independent of $n$ -- the maximal number of points $x_i$ that fit
    in such a sphere cannot be larger than $2.415^d$.

    Thus if $(j, l)\in I$, for each $j$ there are no more than $2\cdot 2.415^d$ possible $\tilde l\in\{1,\dots,n\}$
    such that $(j, \tilde l)$ is also in $I$.
\end{proof}

\subsection{Go-Permdist lower bound}
\label{sec:gopermdist}
The Go-Permdist algorithm in \cite{griffiths2017} tries to find the optimal $\IRMSD$ by decomposing
the space of all rotations into \emph{rotation cubes} $C(\mathbf v, \theta_B) = \{\mathbf r\in[-\pi,\pi]^3:\, \norm{\mathbf r - \mathbf v}_\infty\leq\theta_B\}$.
Each vector $\mathbf r$ in a rotation cube represents a rotation of angle  $\norm{\mathbf r}_2$ around the rotation axis
$\mathbf r / \norm{\mathbf r}_2$.
For simplicity we will write $\mathbf r(\mathbf x)$ to denote a vector $\mathbf x$ that is rotated by $\mathbf r$.
In order to work, the Go-Permdist algorithm needs a lower bound $LB(\mathbf v,\theta_B)$ that depends
only on $\mathbf v$ and $\theta_B$ such that
\begin{equation}
    LB(\mathbf v, \theta_B) \leq \cos(\angle(\mathbf v(\mathbf x), \mathbf r(\mathbf x))), \quad \forall \mathbf r\in C(\mathbf v, \theta_B), \mathbf x\in\R^3.
\end{equation}
In equation 19 of the cited paper such a lower bound is derived that unfortunately does not hold true in all
cases.
As a simple counter example $\mathbf v = (-0.25\pi, 0.25\pi, 0.25\pi)$, $\mathbf r = (-0.48\pi, 0.02\pi, 0)$,
$\mathbf x = (0.12\pi, -0.14\pi, 0)$ and $\theta_B = 0.5\pi$ can be chosen.
As the problem seems to be just a missing factor $\sqrt{3}$, 
adding it to equation 19 in the original paper results in the corrected lower bound
\begin{equation}
    LB(\mathbf v,\theta_B) = \cos\left(\min\left[\pi, \sqrt{3}\frac{\theta_B}{2}\right] \right).
\end{equation}

\section*{Acknowledgement}
This work was funded in parts by the Bundesministerium für Bildung und Forschung (BMBF)
as a part of the Eurostars project E!9389 MultiModel. \\
The main algorithm and all related proofs originate from the PhD thesis of 
one of the authors \cite{bulin2021}.

\section*{Conflict of interest disclosure}

The \emph{Fraunhofer Institute for Algorithms and Scientific Computing} offers commercial 
software products that include the described comparison procedure.

\bibliographystyle{unsrt}
\bibliography{Report}

\end{document}